\documentclass[journal, 10pt]{IEEEtran}

\usepackage{xcolor,colortbl}
\usepackage[T1]{fontenc} 
\usepackage{tikz}
\usepackage{amssymb}
\usepackage{amsthm}
\usepackage{amsmath}
\usepackage{mathrsfs}
\usepackage{balance}
\usepackage{subcaption}
\usepackage{pifont}
\usepackage{enumitem}
\usepackage{booktabs}
\usepackage{multirow}
\usepackage{pgfplots}
\pgfplotsset{compat=newest}
\usetikzlibrary{intersections}
\usepgfplotslibrary{fillbetween}
\usepackage{graphicx}
\usepackage{glossaries}
\usepackage{array}
\usepackage{comment}
\usepackage{multirow}
\usepackage{makecell}
\usepackage{cite}
\usepackage{lipsum}
\usetikzlibrary{shapes}
\usepackage{float}
\usepackage{dblfloatfix}
\usepackage{tabstackengine}
\usepackage{verbatim}
\usepackage{textcomp}
\usepackage{titlesec}
\usepackage[bookmarks=false]{hyperref}
\hypersetup{colorlinks=true,linkcolor=black,citecolor=blue}
\usepackage[capitalise]{cleveref}

\definecolor{mygreen}{rgb}{0.7372, 0.9333, 0.4078}

\newcommand\mytextsf{\bfseries\sffamily\fontsize{9pt}{9pt}\selectfont}

\newtheorem{proposition}{Proposition}

\newtheoremstyle{myremark}
  {\topsep}
  {\topsep}
  {\itshape}
  {0pt}
  {\scshape}
  {.}
  { }
  {}

\theoremstyle{myremark}

\newtheoremstyle{myexample}
  {\topsep}
  {\topsep}
  {\itshape}
  {0pt}
  {\scshape}
  {.}
  { }
  {}

\theoremstyle{myexample}

\titlespacing*
    {\subsection}
    {1pt}
    {1ex}
    {1ex}
\titlespacing*%
    {\section}%
    {1pt}%
    {1ex}%
    {1ex}%

\makeatletter
\let\save@mathaccent\mathaccent
\newcommand*\if@single[3]{%
  \setbox0\hbox{${\mathaccent"0362{#1}}^H$}%
  \setbox2\hbox{${\mathaccent"0362{\kern0pt#1}}^H$}%
  \ifdim\ht0=\ht2 #3\else #2\fi
  }
\newcommand*\rel@kern[1]{\kern#1\dimexpr\macc@kerna}
\newcommand*\wideaccent[2]{\@ifnextchar^{{\wide@accent{#1}{#2}{0}}}{\wide@accent{#1}{#2}{1}}}
\newcommand*\wide@accent[3]{\if@single{#2}{\wide@accent@{#1}{#2}{#3}{1}}{\wide@accent@{#1}{#2}{#3}{2}}}
\newcommand*\wide@accent@[4]{%
  \begingroup
  \def\mathaccent##1##2{%
    \let\mathaccent\save@mathaccent
    \if#42 \let\macc@nucleus\first@char \fi
    \setbox\z@\hbox{$\macc@style{\macc@nucleus}_{}$}%
    \setbox\tw@\hbox{$\macc@style{\macc@nucleus}{}_{}$}%
    \dimen@\wd\tw@
    \advance\dimen@-\wd\z@
    \divide\dimen@ 3
    \@tempdima\wd\tw@
    \advance\@tempdima-\scriptspace
    \divide\@tempdima 10
    \advance\dimen@-\@tempdima
    \ifdim\dimen@>\z@ \dimen@0pt\fi
    \rel@kern{0.6}\kern-\dimen@
    \if#41
      #1{\rel@kern{-0.6}\kern\dimen@\macc@nucleus\rel@kern{0.4}\kern\dimen@}%
      \advance\dimen@0.4\dimexpr\macc@kerna
      \let\final@kern#3%
      \ifdim\dimen@<\z@ \let\final@kern1\fi
      \if\final@kern1 \kern-\dimen@\fi
    \else
      #1{\rel@kern{-0.6}\kern\dimen@#2}%
    \fi
  }%
  \macc@depth\@ne
  \let\math@bgroup\@empty \let\math@egroup\macc@set@skewchar
  \mathsurround\z@ \frozen@everymath{\mathgroup\macc@group\relax}%
  \macc@set@skewchar\relax
  \let\mathaccentV\macc@nested@a
  \if#41
    \macc@nested@a\relax111{#2}%
  \else
    \def\gobble@till@marker##1\endmarker{}%
    \futurelet\first@char\gobble@till@marker#2\endmarker
    \ifcat\noexpand\first@char A\else
      \def\first@char{}%
    \fi
    \macc@nested@a\relax111{\first@char}%
  \fi
  \endgroup
}
\makeatother

\newcommand\widebar{\wideaccent\overline}

\makeatletter
\newcommand{\doublewidetilde}[1]{{%
  \mathpalette\double@widetilde{#1}%
}}
\newcommand{\double@widetilde}[2]{%
  \sbox\z@{$\m@th#1\widetilde{#2}$}%
  \ht\z@=.9\ht\z@
  \widetilde{\box\z@}%
}
\makeatother

\newcommand\scaleddot{\scalebox{.89}{.}}

\makeatletter
\renewcommand{\dddot}[1]{%
  {\mathop{\kern\z@#1}\limits^{\makebox[0pt][c]{\vbox to-2.2\ex@{\kern-\tw@\ex@
   \hbox{\normalfont\scaleddot\kern-0.5pt\scaleddot\kern-0.5pt\scaleddot}\vss}}}}}
\renewcommand{\ddddot}[1]{%
  {\mathop{\kern\z@#1}\limits^{\makebox[0pt][c]{\vbox to-2.2\ex@{\kern-\tw@\ex@
   \hbox{\normalfont\scaleddot\kern-0.5pt\scaleddot\kern-0.5pt\scaleddot\kern-0.5pt\scaleddot}\vss}}}}}
\makeatother

\makeglossaries

\newacronym{ISAC}{ISAC}{integrated sensing and commmunications}
\newacronym{BS}{BS}{base station}
\newacronym{RF}{RF}{radio-frequency}
\newacronym{DAC}{DAC}{digital-to-analog converter}
\newacronym{RIS}{RIS}{reconfigurable intelligent surface}
\newacronym{PAPR}{PAPR}{peak-to-average power ratio}
\newacronym{PLS}{PLS}{physical-layer security}

\newacronym{AWGN}{AWGN}{additive white Gaussian noise}
\newacronym{SNR}{SNR}{signal-to-noise ratio}
\newacronym{SINR}{SINR}{signal-to-interference-plus-noise ratio}
\newacronym{SDR}{SDR}{semidefinite relaxation}
\newacronym{SDP}{SDP}{semidefinite programming}
\newacronym{SCA}{SCA}{successive convex approximation}

\newacronym{IPM}{IPM}{interior point method}  
\newacronym{MILP}{MILP}{mixed-integer linear program} 
\newacronym{MINLP}{MINLP}{mixed-integer nonlinear program} 
\newacronym{MISDP}{MISDP}{mixed-integer semidefinite program} 
\newacronym{MISOCP}{MISOCP}{mixed-integer second-order cone program} 
\newacronym{MIQCP}{MIQCP}{mixed-integer quadratically constrained program} 

\newacronym{AOA}{AOA}{angle of arrival}
\newacronym{AOD}{AOD}{angle of departure}
\newacronym{BME}{BME}{beampattern matching error}
\newacronym{DPG}{DPG}{directional power gain}
\newacronym{RC}{RC}{reflection coefficient}
\newacronym{CSI}{CSI}{channel state information}
\newacronym{MCS}{MCS}{modulation and coding scheme}
\newacronym{BLER}{BLER}{block error rate}
\newacronym{CQI}{CQI}{channel quality indicator}

\newacronym{QoS}{QoS}{quality-of-service} 
\newacronym{LoS}{LoS}{line-of-sight}
\newacronym{NLoS}{NLoS}{non-LoS}

\newacronym{LHS}{LHS}{left-hand-side}
\newacronym{RHS}{RHS}{right-hand-side}
\newacronym{THz}{THz}{terahertz}

\newacronym{BGM}{BGM}{bipartite graph matching}

\newacronym{ES}{ES}{exhaustive search}
\newacronym{BnC}{BnC}{branch-and-cut}

\newacronym{CRB}{CRB}{Cramér-Rao bound}

\begin{document}


%





\title{\huge Secure RIS-Aided Multicasting: Globally Optimal Beam Management and Discrete-Phase RIS Configuration}




\author{
\IEEEauthorblockN{Luis F. Abanto-Leon and 
				  Setareh Maghsudi} \\ 
\IEEEauthorblockA{Ruhr University Bochum, Germany \\
\{luis.abantoleon, setareh.maghsudi\}@ruhr-uni-bochum.de} 
}

\maketitle
\medskip

\begin{abstract}

\Glspl{RIS} are poised to revolutionize wireless multicasting by enabling extended coverage and reliable operation in obstructed environments. These benefits, however, can be undermined by security vulnerabilities arising from practical deployment factors. This work addresses three such critical factors, (i) the discrete nature of \gls{RIS} phase shifts, (ii) the presence of colluding eavesdroppers, and (iii) the inefficiency of static illumination beams, each threatening security if not properly accounted for in system design. To mitigate these issues, we formulate a joint resource allocation problem that minimizes the \emph{wiretap \gls{SNR}} across all eavesdroppers by co-optimizing the \gls{RIS} configuration and the \gls{BS} beam management. This yields a complex, nonconvex \gls{MINLP}, which we equivalently reformulate into a tractable \gls{MIQCP} solvable to global optimality. Numerical results confirm that the proposed scheme significantly bolsters security, suppressing the wiretap \gls{SNR} by up to $ 58\%$ compared to existing baselines.

\end{abstract}


\begin{IEEEkeywords}
Reconfigurable intelligent surface, discrete phases, beam management, physical-layer security, multicast.
\end{IEEEkeywords}



\glsresetall
\section{Introduction} \label{sec:introduction}

To meet soaring capacity demands, 3GPP has prioritized operation in high-frequency bands, such as mmWave and THz, which offer abundant spectrum resources \cite{abanto2025:optimal-user-target-scheduling-user-target-pairing-low-resolution-phase-only-beamforming-isac-systems}. However, these frequencies are highly susceptible to blockage. To mitigate this, \glspl{RIS} have emerged as a transformative technology capable of reconfiguring wireless propagation environments to bypass obstacles. In parallel, multicasting is envisioned as a key enabler for 6G, exploiting the wide bandwidths available in high-frequency spectra \cite{chukhno2024:models-methods-solutions-multicasting-5g-6g-mmwave-sub-thz-systems}. The synergy of \glspl{RIS} and multicasting thus holds significant promise for ubiquitous, high-capacity coverage.

Despite this potential, ensuring \gls{PLS} in \gls{RIS}-aided multicast systems remains highly challenging. Existing security-oriented designs often rely on idealized assumptions, overlooking critical practical deployment factors. As discussed next, neglecting these aspects can lead to significant security vulnerabilities.

\textit{Discrete phases:} Most RIS designs assume continuous phases, which lack practical feasibility, e.g., \cite{shu2023intelligent, wang2022intelligent}. To address this, continuous-phase relaxation followed by projection has become standard practice \cite{yan2023passive, xu2024reconfigurable, du2025discrete}. However, the projection step distorts the intended RIS beampattern by introducing quantization artifacts and generating unintended sidelobes that can be exploited by eavesdroppers, representing a largely overlooked vulnerability in current research.

\textit{Colluding eavesdroppers:} While some works addressed \gls{RIS}-assisted multicasting under independent eavesdroppers \cite{wang2022multicast, lin2023secure}, practical adversaries may collude to enhance their joint decoding capability \cite{geraci2014secrecy, shu2023intelligent, wang2022intelligent}. This threat is acute in multicasting, where a common signal is broadcast to multiple users, expanding the attack surface. Yet, collusion-based interception in such settings remains largely unexplored.

\textit{Dynamic illuminating beam:}  While beam management has been studied in prior works \cite{dejonghe2024design, yilmaz2025joint}, its role in secure communications remains largely underexplored. When direct \gls{BS}-user links are obstructed, a common approach is to use a static \gls{BS}-to-\gls{RIS} illuminating beam~\cite{abanto2026:fast, deram2025risense}, consistent with fixed infrastructure. However, this imposes a fixed angle of incidence, restricting the spatial degrees of freedom at the \gls{RIS} and limiting its ability to suppress eavesdroppers, thereby motivating dynamic beam management.

\begin{table}[!t]
	\begin{center}
		\fontsize{6.5}{6}\selectfont
		\setlength\tabcolsep{2.8pt}
		\renewcommand{\arraystretch}{0.85}
		\setlength{\extrarowheight}{0pt}
		\centering
		\caption{Comparison of related work.}
		\label{table:related-literature}
		\begin{tabular}{c c c c c c} 
			\toprule
			\makecell{\textbf{Works}} & \makecell{\textbf{System}}  & \makecell{\textbf{RIS phases}} & \makecell{\textbf{Beam} \\ \textbf{management}} &  \makecell{\textbf{Eavesdroppers}} & \makecell{\textbf{Solution}} 
			\\
			\midrule
			\cite{shu2023intelligent, wang2022intelligent} & Unicast & Continuous & N/A & \multicolumn{1}{>{\columncolor{green!20}}c}{\tabularCenterstack{c}{\textbf{Colluding}}} & Suboptimal 
			\\
			\midrule
			\cite{yan2023passive} & \multicolumn{1}{>{\columncolor{green!20}}c}{\tabularCenterstack{c}{\textbf{Multicast}}} & \makecell{Discrete via projection} & Fixed & N/A & Suboptimal 
			\\
			\midrule
			\cite{xu2024reconfigurable, du2025discrete} & \multicolumn{1}{>{\columncolor{green!20}}c}{\tabularCenterstack{c}{\textbf{Multicast}}} & \makecell{Discrete via projection}  & N/A & N/A & Suboptimal 
			\\
			\midrule
			\cite{wang2022multicast, lin2023secure} & \multicolumn{1}{>{\columncolor{green!20}}c}{\tabularCenterstack{c}{\textbf{Multicast}}} & Continuous & N/A & Non-colluding & Suboptimal 
			\\
			\midrule
			\cite{dejonghe2024design, yilmaz2025joint} & Unicast &  \makecell{Discrete via projection}  & \multicolumn{1}{>{\columncolor{green!20}}c}{\tabularCenterstack{c}{\textbf{Dynamic}}} & N/A & Suboptimal
			\\
			\midrule
			\cite{abanto2026:fast} & Unicast & \multicolumn{1}{>{\columncolor{green!20}}c}{\tabularCenterstack{c}{\textbf{Discrete}}} & Fixed & N/A & Suboptimal
			\\
			\midrule
			\cite{deram2025risense} & Unicast & \makecell{Discrete via projection}  & Fixed & N/A & Suboptimal
			\\
			\midrule
			\multicolumn{1}{>{\columncolor{green!20}}c}{\tabularCenterstack{c}{\textbf{Proposed}}} & \multicolumn{1}{>{\columncolor{green!20}}c}{\tabularCenterstack{c}{\textbf{Multicast}}} & \multicolumn{1}{>{\columncolor{green!20}}c}{\tabularCenterstack{c}{\textbf{Discrete}}} & \multicolumn{1}{>{\columncolor{green!20}}c}{\tabularCenterstack{c}{\textbf{Dynamic}}} & \multicolumn{1}{>{\columncolor{green!20}}c}{\tabularCenterstack{c}{\textbf{Colluding}}} & \multicolumn{1}{>{\columncolor{green!20}}c}{\tabularCenterstack{c}{\textbf{Optimal}}}
			\\
			\bottomrule
		\end{tabular}
	\end{center}
	\vspace{-5.5mm}
\end{table}

Despite their practical relevance, the impact of these critical factors on \gls{PLS} remains largely unexamined. To bridge this gap, we formulate a comprehensive resource allocation problem that jointly optimizes (i) the \gls{RIS} configuration and (ii) the \gls{BS} beam management to minimize the wiretap \gls{SNR} under colluding eavesdroppers, while ensuring the required quality of service for legitimate users. The resulting problem is a nonconvex \gls{MINLP}, which we equivalently reformulate as a tractable \gls{MIQCP} solvable to global optimality. Numerical results demonstrate that the proposed joint design significantly suppresses the wiretap \gls{SNR} compared to existing benchmarks. A detailed comparison with the state-of-the-art is provided in \cref{table:related-literature}.

\emph{Notation}: The transpose and Hermitian transpose are denoted by $ (\cdot)^\mathrm{T} $ and $ (\cdot)^\mathrm{H} $. The sets of complex, binary, and non-negative numbers are denoted by $ \mathbb{C} $, $ \mathbb{B} $, and $ \mathbb{R}_+ $, respectively.


\section{System Model and Problem Formulation} \label{sec:system-model-problem-formulation}

We consider a downlink multicast system consisting of a multi-antenna \gls{BS}, an \gls{RIS}, $U$ single-antenna users, and $E$ single-antenna eavesdroppers. The direct BS-user links are blocked, so communication relies solely on signals reflected by the \gls{RIS}, as illustrated in \cref{fig:system-model}. The \gls{RIS} comprises $M_{\mathrm{x}} \times M_{\mathrm{z}}$ uniformly spaced elements in the $xz$-plane, with total size $M = M_{\mathrm{x}} M_{\mathrm{z}}$. The \gls{BS} has $N$ antennas and is located at the left of the \gls{RIS}, with its antennas placed along the $y$-axis. Users and eavesdroppers are located in the half-space in front of the \gls{RIS}. The corresponding index sets are $\mathcal{U} = \{1,\dots,U\}$ and $\mathcal{E} = \{1,\dots,E\}$. For notational simplicity, the $ u $-th user is denoted by $ \mathsf{U}_u $ and the $ e $-th eavesdropper by $ \mathsf{E}_e $.
 
\textbf{Beam management model:}
The \gls{BS} employs a predefined beam codebook $ \{\mathbf{t}_k\}_{k \in \mathcal{K}} $ comprising $ K $ candidate beams for \gls{RIS} illumination, such that $ \|\mathbf{t}_k\|_2^2 = P_{\mathrm{tx}} $, where $P_{\mathrm{tx}}$ is the total transmit power and $ \mathcal{K} = \{1, \dots, K\} $. To model the selection of a single beam, we introduce the following constraints 
\begin{align*}
	& \mathrm{C}_{1}: ~ \alpha_k \in \mathbb{B}, \forall k \in \mathcal{K}, \label{aa}
	~~
	& \mathrm{C}_{2}: ~ \textstyle \sum_{k \in \mathcal{K}} \alpha_k = 1.
\end{align*}

In $ \mathrm{C}_{1} $, $ \alpha_k = 1 $ indicates that beam $ \mathbf{t}_k $ is active, and $ \alpha_k = 0 $ otherwise. In addition, $ \mathrm{C}_{2} $ enforces that exactly one beam is chosen from the codebook. Consequently, the transmit beam used by the \gls{BS} is given by
\begin{align}
	\textstyle \mathbf{b} = \sum_{k \in \mathcal{K}} \alpha_k \mathbf{t}_k.
\end{align}

\textbf{RIS phase shift model:}
The \gls{RIS} is characterized by $ \mathbf{w} \in \mathbb{C}^{M \times 1} $, where each element $ \left[ \mathbf{w} \right]_m $ corresponds to the phase shift of the $m$-th reflecting element. The phase shifts are drawn from a finite discrete set, modeled as
\begin{align*} 
	\mathrm{C}_{3}: & ~ \left[ \mathbf{w} \right]_m \in \left\lbrace \mathrm{e}^{\mathrm{j} \phi_1}, \dots, \mathrm{e}^{\mathrm{j} \phi_Q} \right\rbrace, \forall m \in \mathcal{M}, 
\end{align*}
where $\phi_q$ denotes the $q$-th phase value and $Q$ is the number of available phase choices.

\textbf{Communication model:}
The \gls{BS} transmits data symbol $ s \in \mathbb{C} $, modeled as a zero-mean, unit-variance complex random variable, i.e., $ \mathbb{E} \left\lbrace s s^{*} \right\rbrace = 1 $. Thus, after reflection by the \gls{RIS}, the signal received by $ \mathsf{U}_u $ is 
\begin{align*}
	y_{u} & = \mathbf{h}^\mathrm{T}_{u} \mathrm{diag} (\mathbf{w}) \mathbf{G} \mathbf{b}  s + n_{u},
	\\
			& = \textstyle \sum_{k \in \mathcal{K}} \alpha_k \mathbf{h}^\mathrm{T}_{u} \mathrm{diag} (\mathbf{w}) \mathbf{G} \mathbf{t}_k s + n_{u},
\end{align*}
where $ \mathbf{h}_{u} \in \mathbb{C}^{M \times 1} $ is the channel between the \gls{RIS} and $ \mathsf{U}_u $, $ \mathbf{G} \in \mathbb{C}^{M \times N } $ is the channel between the \gls{RIS} and \gls{BS}, and $ n_{u} \sim \mathcal{CN} \left( 0,\sigma_{u}^2 \right) $ is \gls{AWGN} at $ \mathsf{U}_u $. Hence, the \gls{SNR} of $ \mathsf{U}_u $ is
\begin{equation}
	\mathsf{SNR}_{u} \left( \boldsymbol{\Omega} \right) = \textstyle \big|\sum_{k \in \mathcal{K}} \alpha_k \mathbf{h}^\mathrm{T}_{u} \mathrm{diag} (\mathbf{w}) \mathbf{G} \mathbf{t}_k  \big|^2 / {\sigma_{u}^2} ,
\end{equation}
where $ \boldsymbol{\Omega} \triangleq \left( \boldsymbol{\alpha}, \mathbf{w} \right) $ denotes the set of all decision variables in the resource allocation problem, with $ \boldsymbol{\alpha} = \left[ \alpha_1, \dots, \alpha_K \right]^\mathrm{T} $. To ensure reliable communication, each user is required to satisfy
\begin{align*}
	\mathrm{C}_{4}: & ~ \mathsf{SNR}_{u} \left( \boldsymbol{\Omega} \right) \geq \Gamma_\mathrm{th}, \forall u \in \mathcal{U}, 
\end{align*} 
where $ \Gamma_\mathrm{th} $ is the imposed \gls{SNR} threshold.

\textbf{Eavesdropping model:} The signal reflected by the \gls{RIS} and received by $ \mathsf{E}_e $ is
\begin{align*}
	\widebar{y}_{e} & = \mathbf{f}_{e}^\mathrm{T} \mathrm{diag} (\mathbf{w}) \mathbf{G} \mathbf{b} s + \widebar{n}_{e},
	\\
	& = \textstyle \sum_{k \in \mathcal{K}} \alpha_k \mathbf{f}_{e}^\mathrm{T} \mathrm{diag} (\mathbf{w}) \mathbf{G} \mathbf{t}_k s + \widebar{n}_{e},
\end{align*}
where $ \mathbf{f}_{e} \in \mathbb{C}^{M \times 1} $ is the channel between the \gls{RIS} and $ \mathsf{E}_e $, while $ \widebar{n}_{e} \sim \mathcal{CN} \left( 0, \widebar{\sigma}_{e}^2 \right) $ is \gls{AWGN} at $ \mathsf{E}_e $. The \gls{SNR} experienced by eavesdropper $ \mathsf{E}_e $ is
\begin{equation} \label{eqn:eavesdropper-snr}
	\widebar{\mathsf{SNR}}_{e} \left( \boldsymbol{\Omega} \right) =  \textstyle \big| \textstyle \sum_{k \in \mathcal{K}} \alpha_k \mathbf{f}_{e}^\mathrm{T} \mathrm{diag} (\mathbf{w}) \mathbf{G} \mathbf{t}_k \big|^2 / 
	{ \widebar{\sigma}_{e}^2}.
\end{equation}

Under a collusive strategy, the eavesdroppers can enhance information decoding through collaboration. Thus, we employ the \emph{wiretap \gls{SNR}} in \cite{geraci2014secrecy, wang2024secrecy} as our performance metric, which represents the collective signal interception capability of all eavesdroppers, and is defined as
\begin{equation} \label{eqn:collusive-snr}
	\widetilde{\mathsf{SNR}} \left( \boldsymbol{\Omega} \right) = \textstyle \sum_{e \in \mathcal{E}} \widebar{\mathsf{SNR}}_{e} \left( \boldsymbol{\Omega} \right).
\end{equation}

\begin{figure}[!t]
	\centering
	\includegraphics[width=0.68\columnwidth]{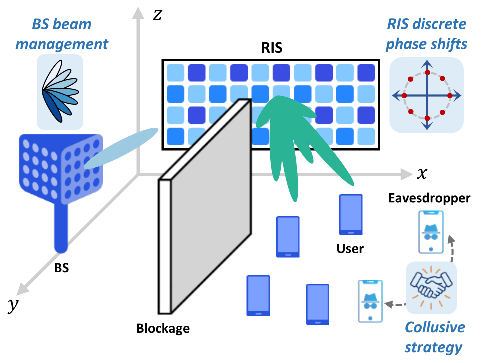}
	\vspace{-2mm}
	\caption{Secure RIS-assisted multicast system.}	
	\label{fig:system-model}
	\vspace{-5mm}
\end{figure}

\textbf{Channel model:} 
Let $ \mathbf{G} = {\omega} \Big( \sqrt{\tfrac{{\chi}}{{\chi}+1}} \cdot \mathbf{G}^\mathrm{LoS} + \sqrt{\tfrac{1}{{\chi}+1}} {\mathbf{G}}^\mathrm{NLoS} \Big) $, where $ \mathbf{G}^\mathrm{LoS} $ and $ \mathbf{G}^\mathrm{NLoS} $ are the \gls{LoS} and \gls{NLoS} components, $ {\omega} $ is the large-scale fading coefficient, and $ \chi $ is the Rician fading factor. Here, $ \mathbf{G}^\mathrm{LoS} = \mathrm{e}^{ -\mathrm{j} 2 \pi f_0 \tfrac{d}{v}}  {\mathbf{a}}_{\mathrm{ris}} (\gamma^\mathrm{az},\gamma^\mathrm{el}) ~ {\mathbf{a}}_{\mathrm{bs}}^\mathrm{T} (\gamma^\mathrm{az},\gamma^\mathrm{el}) $, where $d$ is the RIS-BS distance, $f_0$ is the carrier frequency, and $ v $ is the speed of light. The \gls{RIS} is centered at $(0,0,z_{\mathrm{ris}})$ and the \gls{BS} is centered at $(\dot{x}_{\mathrm{bs}}, \dot{y}_{\mathrm{bs}}, \dot{z}_{\mathrm{bs}})$. The $m$-th element of $ \mathbf{a}_{\mathrm{ris}} (\gamma^\mathrm{az},\gamma^\mathrm{el}) $ is $ \mathrm{e}^{\mathrm{j} \tfrac{2 \pi f_0}{v} \mathbf{v}^\mathrm{T} \mathbf{p}^{\mathrm{ris}}_m } $, where $ \mathbf{p}^{\mathrm{ris}}_m $ is its coordinate relative to the RIS and $ \mathbf{v}= \left[ \sin (\gamma^\mathrm{el}) \cos (\gamma^\mathrm{az}), \sin (\gamma^\mathrm{el}) \sin (\gamma^\mathrm{az}), \cos (\gamma^\mathrm{el}) \right]^\mathrm{T} $. Similarly, the $n$-th element of $ \mathbf{a}_{\mathrm{bs}} (\gamma^\mathrm{az},\gamma^\mathrm{el}) $ is $ \mathrm{e}^{\mathrm{j} \tfrac{2 \pi f_0}{v} \mathbf{v}^\mathrm{T} \mathbf{p}^{\mathrm{bs}}_n } $, where $ \mathbf{p}^{\mathrm{bs}}_n $ is its coordinate relative to the BS. The channels $ \mathbf{h}_{u} $ and $ \mathbf{f}_{e} $ are defined analogously, specifically, $ \mathbf{h}_{u} = \hat{\omega}_{u} \Big( \sqrt{\tfrac{\hat{\chi}_u}{\hat{\chi}_u+1}} \cdot \mathbf{h}_{u}^\mathrm{LoS} + \sqrt{\tfrac{1}{\hat{\chi}_u+1}} {\mathbf{h}}_{u}^\mathrm{NLoS} \Big) $ and $ \mathbf{f}_{e} = \check{\omega}_{e} \Big( \sqrt{\tfrac{\check{\chi}_e}{\check{\chi}_e+1}} \cdot \mathbf{f}_{e}^\mathrm{LoS} + \sqrt{\tfrac{1}{\check{\chi}_e+1}} {\mathbf{f}}_{e}^\mathrm{NLoS} \Big) $, with parameters $ \hat{\omega}_{u} $, $ \check{\omega}_{e} $, $ \hat{\chi}_u $, and $ \check{\chi}_e $. Here, $ \mathbf{h}_{u}^\mathrm{LoS} = \mathrm{e}^{ -\mathrm{j} 2 \pi f_0 \tfrac{\hat{d}_u}{v}} {\mathbf{a}}_{\mathrm{ris}} (\theta_u^\mathrm{az}, \theta_u^\mathrm{el}) $ and $ \mathbf{f}_{e}^\mathrm{LoS} = \mathrm{e}^{ -\mathrm{j} 2 \pi f_0 \tfrac{\check{d}_e}{v}} {\mathbf{a}}_{\mathrm{ris}} (\beta_e^\mathrm{az}, \beta_e^\mathrm{el}) $, where $ (\theta_u^\mathrm{az}, \theta_u^\mathrm{el}) $ and $ (\beta_e^\mathrm{az}, \beta_e^\mathrm{el}) $ are the \gls{LoS} angles of $ \mathsf{U}_u $ and $ \mathsf{E}_e $, respectively, while $ \hat{d}_u $ and $ \check{d}_e $ denote their respective distances from the \gls{RIS}. The \gls{NLoS} components $ \mathbf{h}_{u}^\mathrm{NLoS} $, $ \mathbf{f}_{e}^\mathrm{NLoS} $, and $ \mathbf{G}^\mathrm{NLoS} $ are modeled as zero-mean, unit-variance complex Gaussian random variables\footnote{We assume user \gls{CSI} is obtained via uplink pilots, whereas eavesdropper \gls{CSI} is estimated using sensing or location-based methods. To establish an optimal performance bound, we assume perfect \gls{CSI} for both users and eavesdroppers. In practice, however, \gls{CSI} is inherently imperfect, which inevitably leads to a secrecy performance degradation relative to the ideal case. While a rigorous treatment of \gls{CSI} uncertainty is crucial, it remains beyond the scope of the current work and is therefore deferred to future investigation.}.

\textbf{Problem formulation:} To safeguard multicast transmission against interception, we jointly optimize the \gls{RIS} phase configuration and the \gls{BS} beam management, ensuring a target SNR for legitimate users while minimizing the wiretap SNR. The corresponding resource allocation problem is formulated as
\begin{align*} 
	\mathcal{P}: ~~ \underset{ \boldsymbol{\Omega}}{\mathrm{minimize}} ~ f \left( \boldsymbol{\Omega} \right) \triangleq \widetilde{\mathsf{SNR}} \left( \boldsymbol{\Omega} \right)  ~~ \mathrm{s.t.} ~~ \mathrm{C}_{1} - \mathrm{C}_{4}, 
\end{align*}
where $\mathcal{P}$ is a nonconvex \gls{MINLP} and thus challenging to solve. In particular, obtaining its global optimum would require exhaustive enumeration over all \gls{RIS} phase configurations and \gls{BS} candidate beam selections, leading to a prohibitive worst-case computational complexity of $\mathcal{O}(K Q^M)$.

\section{Proposed MIQCP Reformulation} \label{sec:proposed-approach}

To circumvent the high complexity of $\mathcal{P}$, we reformulate it as an equivalent \gls{MIQCP}, denoted $\mathcal{P}'$, via \textbf{\cref{thm:proposition-1}} to \textbf{\cref{thm:proposition-4}}. These propositions transform intractable expressions into convex equivalents while preserving the original feasible set. The proofs are provided in the \textbf{Appendix}.

\subsection{Transformation of the objective function}

To handle the intractable objective function, we introduce an auxiliary variable to convert the objective into an equivalent constraint. We then decompose this constraint into elementary expressions, yielding a tractable reformulation that facilitates subsequent manipulation.
\begin{proposition} \label{thm:proposition-1}
	The objective function $ f \left( \boldsymbol{\Omega} \right) \triangleq $ $ ~ \widetilde{\mathsf{SNR}} \left( \boldsymbol{\Omega} \right) $ can be recast as a new function $ g \left( \boldsymbol{\Omega}' \right) \triangleq \mu $ by introducing additional constraints $ \mathrm{D}_{1} $, $ \mathrm{D}_{2} $, $ \mathrm{D}_{3} $, and $ \mathrm{D}_{4} $,
	\begin{equation} \nonumber
		f \left( \boldsymbol{\Omega} \right)
		\Leftrightarrow
			\begin{cases}
					g \left( \boldsymbol{\Omega}' \right) \triangleq \mu,
					~~~~~~~~~~~~~~~~~~~~
				   	\mathrm{D}_{1}: ~ \mu \in \mathbb{R}_+,
				   	\\	
				   	\mathrm{D}_{2}: ~ \delta_{e} \in \mathbb{R}_+, \forall e \in \mathcal{E}, 
				   	~~~~~~~
				   	\mathrm{D}_{3}: ~ \sum_{e \in \mathcal{E}} \delta_{e}^2 \leq \mu, 
				   	\\	
				   	\mathrm{D}_{4}: ~ \frac{\big| \textstyle \sum_{k \in \mathcal{K}} \alpha_k \mathbf{p}_{e,k}^\mathrm{H} \mathbf{w} \big| }{ \widebar{\sigma}_{e}}  \leq \delta_{e}, \forall e \in \mathcal{E}, 
			\end{cases}
	\end{equation}
	where $ \mu $ and $ \delta_{e} $ are new variables, $ \mathbf{p}_{e,k} = \mathrm{diag} \left( \mathbf{G}^* \mathbf{t}_k^* \right) \mathbf{f}_{e}^* $, and $ \boldsymbol{\Omega}' $ denotes the set of decision variables of the reformulated problem $ \mathcal{P}' $. This set is progressively augmented as additional variables are introduced during the transformation.
	
\end{proposition}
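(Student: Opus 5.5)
The argument is an epigraph reformulation. First we rewrite each eavesdropper's effective channel in a form linear in $\mathbf{w}$; then we introduce epigraph variables and show that they are tight at any optimum of the minimization.

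\emph{Step 1 (rewriting the channel).} We use the identity $\mathbf{f}_e^\mathrm{T}\mathrm{diag}(\mathbf{w})\mathbf{G}\mathbf{t}_k = \mathbf{w}^\mathrm{T}\mathrm{diag}(\mathbf{f}_e)\mathbf{G}\mathbf{t}_k$. This holds because $\mathbf{a}^\mathrm{T}\mathrm{diag}(\mathbf{w})\mathbf{c}=\sum_m [\mathbf{a}]_m[\mathbf{w}]_m[\mathbf{c}]_m$, which is symmetric in the roles of $\mathbf{a}$ and $\mathbf{w}$. Writing $\mathbf{p}_{e,k}=\mathrm{diag}(\mathbf{G}^*\mathbf{t}_k^*)\mathbf{f}_e^*$, we get $\mathbf{p}_{e,k}^\mathrm{H}=(\mathrm{diag}(\mathbf{G}\mathbf{t}_k)\mathbf{f}_e)^\mathrm{T}$. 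Hence $\mathbf{p}_{e,k}^\mathrm{H}\mathbf{w}=\sum_m [\mathbf{f}_e]_m[\mathbf{G}\mathbf{t}_k]_m[\mathbf{w}]_m$, which is exactly the scalar above. Summing over $k$ with weights $\alpha_k$ gives
\[
\widebar{\mathsf{SNR}}_e(\boldsymbol{\Omega}) = \Big(\big|\textstyle\sum_{k}\alpha_k\mathbf{p}_{e,k}^\mathrm{H}\mathbf{w}\big|/\widebar{\sigma}_e\Big)^2 ,
\]
and therefore $f(\boldsymbol{\Omega})=\sum_e x_e^2$, where $x_e\triangleq|\sum_k\alpha_k\mathbf{p}_{e,k}^\mathrm{H}\mathbf{w}|/\widebar{\sigma}_e\ge 0$.

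\emph{Step 2 (equivalence of the two problems).} We compare $\mathcal{P}$ with the problem of minimizing $\mu$ over $(\boldsymbol{\Omega},\mu,\{\delta_e\})$ subject to $\mathrm{C}_1$--$\mathrm{C}_4$ and $\mathrm{D}_1$--$\mathrm{D}_4$.

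(a) Any feasible $\boldsymbol{\Omega}$ of $\mathcal{P}$ extends to a feasible point of the new problem with the same value. Take $\delta_e=x_e$ and $\mu=\sum_e x_e^2$. Then $\mathrm{D}_1$, $\mathrm{D}_2$ hold since both are nonnegative, and $\mathrm{D}_3$, $\mathrm{D}_4$ hold with equality.

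(b) Conversely, any feasible point of the new problem satisfies $\mu \ge \sum_e\delta_e^2 \ge \sum_e x_e^2 = f(\boldsymbol{\Omega})$. Here the second inequality uses $0\le x_e\le\delta_e$ together with the monotonicity of $t\mapsto t^2$ on $\mathbb{R}_+$.

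Combining (a) and (b), the optimal values coincide and the $\boldsymbol{\Omega}$-components of the optimal solutions coincide. Moreover, at any optimum of the new problem both $\mathrm{D}_3$ and $\mathrm{D}_4$ are active. If either were slack, we could decrease $\mu$ (or $\delta_e$ and then $\mu$) while staying feasible, contradicting optimality.

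\emph{Main obstacle.} The only delicate point is the sign requirement on $\delta_e$ in step (b). The chain $x_e\le\delta_e\Rightarrow x_e^2\le\delta_e^2$ needs $\delta_e\ge0$, and this is exactly why $\mathrm{D}_2$ is imposed; although $\mathrm{D}_4$ already forces $\delta_e\ge x_e\ge0$, so $\mathrm{D}_2$ is in fact redundant. The remaining steps are routine bookkeeping of the conjugations in Step 1.
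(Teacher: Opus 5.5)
Your proposal is correct and follows essentially the same epigraph route as the paper. That route is: an auxiliary $\mu$ bounding the wiretap SNR, per-eavesdropper variables $\delta_e$ with $\sum_e\delta_e^2\le\mu$, the rewriting $\mathbf{f}_e^\mathrm{T}\mathrm{diag}(\mathbf{w})\mathbf{G}\mathbf{t}_k=\mathbf{p}_{e,k}^\mathrm{H}\mathbf{w}$, and a square root. You only make explicit the conjugation check and the two-direction argument that the paper asserts as an ``equivalent decomposition,'' and your remark that $\mathrm{D}_2$ is redundant given $\mathrm{D}_4$ is right, since the paper uses $\delta_e\ge 0$ only to pass from $\widebar{\mathsf{SNR}}_e\le\delta_e^2$ to $\mathrm{D}_4$ by taking square roots.
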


\subsection{Transformation of constraint $ \mathrm{D}_{4} $}

The summation inside the absolute value, together with the multiplicative coupling between $ \alpha_k $ and $ \mathbf{w} $, makes constraint $ \mathrm{D}_{4} $ intractable. To enable efficient optimization, we reformulate $ \mathrm{D}_{4} $ into an equivalent form that eliminates these couplings while preserving the original feasible solution set.
\begin{proposition} \label{thm:proposition-2}
	Constraint $ \mathrm{D}_{4} $ can be equivalently rewritten as constraint $ \mathrm{E}_{1} $
	\begin{align*} \nonumber
		\mathrm{D}_{5} \Leftrightarrow
				   	\mathrm{E}_{1}: & \frac{\big| \mathbf{p}_{e,k}^\mathrm{H} \mathbf{w} \big| }{ \widebar{\sigma}_{e}} \leq \delta_{e} + (1 - \alpha_k) L_{e}, \forall e \in \mathcal{E}, k \in \mathcal{K},
	\end{align*}
	where $ L_{e} = \sqrt{M P_\mathrm{tx}} \left\| \mathbf{f}_{e} \right\|_2 \left\| \mathbf{G}  \right\|_\mathrm{F} / \widebar{\sigma}_{e} $. 
\end{proposition}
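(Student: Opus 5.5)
The plan is a standard big-M argument resting on constraints $\mathrm{C}_1$ and $\mathrm{C}_2$. (The statement writes $\mathrm{D}_5$; I read it as $\mathrm{D}_4$.) Under $\mathrm{C}_1$--$\mathrm{C}_2$, exactly one index $k^\star$ has $\alpha_{k^\star}=1$ and all others are zero. Hence
\[
\textstyle \big|\sum_{k\in\mathcal{K}}\alpha_k \mathbf{p}_{e,k}^\mathrm{H}\mathbf{w}\big| = \big|\mathbf{p}_{e,k^\star}^\mathrm{H}\mathbf{w}\big|,
\]
so $\mathrm{D}_4$ is equivalent to $|\mathbf{p}_{e,k^\star}^\mathrm{H}\mathbf{w}|/\widebar{\sigma}_e \le \delta_e$ for all $e$. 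I will show that $\mathrm{E}_1$ encodes exactly this condition.

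First, for $k=k^\star$, the term $(1-\alpha_{k^\star})L_e$ vanishes. So $\mathrm{E}_1$ reduces to $|\mathbf{p}_{e,k^\star}^\mathrm{H}\mathbf{w}|/\widebar{\sigma}_e\le\delta_e$, which is the rewritten $\mathrm{D}_4$. Second, for $k\neq k^\star$, $\mathrm{E}_1$ reads $|\mathbf{p}_{e,k}^\mathrm{H}\mathbf{w}|/\widebar{\sigma}_e\le \delta_e+L_e$. I must show this holds automatically for every feasible $\mathbf{w}$ and every $\delta_e\ge 0$ (from $\mathrm{D}_2$). It therefore suffices to prove the uniform bound $|\mathbf{p}_{e,k}^\mathrm{H}\mathbf{w}|/\widebar{\sigma}_e\le L_e$ for all $k$ and all $\mathbf{w}$ satisfying $\mathrm{C}_3$. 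Given that bound, the constraints with $k\neq k^\star$ are redundant. The feasible sets in $(\boldsymbol{\alpha},\mathbf{w},\delta_e)$ then coincide, proving both directions of the equivalence.

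The main step is the bound itself. I will write $\mathbf{p}_{e,k}^\mathrm{H}\mathbf{w} = \mathbf{f}_e^\mathrm{T}\mathrm{diag}(\mathbf{w})\mathbf{G}\mathbf{t}_k$ and apply Cauchy--Schwarz and submultiplicativity:
\[
|\mathbf{f}_e^\mathrm{T}\mathrm{diag}(\mathbf{w})\mathbf{G}\mathbf{t}_k| \le \|\mathbf{f}_e\|_2\,\|\mathrm{diag}(\mathbf{w})\|_2\,\|\mathbf{G}\|_2\,\|\mathbf{t}_k\|_2 .
\]
By $\mathrm{C}_3$, every entry of $\mathbf{w}$ has unit modulus, so $\|\mathrm{diag}(\mathbf{w})\|_2 = 1\le\sqrt{M}$; equivalently, $\|\mathrm{diag}(\mathbf{w})\|_\mathrm{F}=\sqrt{M}$ if one prefers Frobenius norms throughout. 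I will also use $\|\mathbf{G}\|_2\le\|\mathbf{G}\|_\mathrm{F}$ and $\|\mathbf{t}_k\|_2=\sqrt{P_\mathrm{tx}}$. Together these give the bound $\sqrt{MP_\mathrm{tx}}\|\mathbf{f}_e\|_2\|\mathbf{G}\|_\mathrm{F}$, and dividing by $\widebar{\sigma}_e$ yields $L_e$. The only delicate point is choosing the norm chain so that it reproduces exactly the stated constant, which is somewhat loose. Everything else is bookkeeping on the binary cases.
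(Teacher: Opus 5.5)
Your proposal is correct and follows essentially the paper's route: you collapse the sum using the one-hot structure of $\boldsymbol{\alpha}$ from $\mathrm{C}_1$--$\mathrm{C}_2$, then decouple with a big-M term whose constant $L_e$ is a Cauchy--Schwarz upper bound on $|\mathbf{p}_{e,k}^\mathrm{H}\mathbf{w}|/\widebar{\sigma}_e$. You skip the paper's triangle-inequality step and make explicit what it leaves implicit: the redundancy of the $k\neq k^\star$ constraints (via $\delta_e\ge 0$) and the norm chain, which actually yields the tighter constant $\sqrt{P_\mathrm{tx}}\|\mathbf{f}_e\|_2\|\mathbf{G}\|_\mathrm{F}/\widebar{\sigma}_e$, so the stated $L_e$ (whose extra $\sqrt{M}$ comes from $\|\mathbf{w}\|_2=\sqrt{M}$) is a fortiori valid.
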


\subsection{Transformation of constraint $ \mathrm{C}_{3} $}

To circumvent the combinatorial complexity of $\mathrm{C}_3$, we adopt a one-hot encoding approach. This replaces the multiple-choice selection with a set of linear constraints while preserving global optimality within the reformulated solution space.
\begin{proposition} \label{thm:proposition-3}
	Constraint $ \mathrm{C}_{3} $ can be equivalently rewritten as constraints $ \mathrm{F}_{1} $, $ \mathrm{F}_{2} $, and $ \mathrm{F}_{3} $
	\begin{equation} \nonumber
		\mathrm{C}_{3} \Leftrightarrow
			\begin{cases}
				   	\mathrm{F}_{1}: \mathbf{z}_m \in \mathbb{B}^{Q \times 1}, \forall m \in \mathcal{M}, 
				   	\\
				   	\mathrm{F}_{2}: \mathbf{1}^\mathrm{T} \mathbf{z}_m = 1, \forall m \in \mathcal{M},
				   	\\
				   	\mathrm{F}_{3}: \left[ \mathbf{w} \right]_m = \mathbf{q}^\mathrm{T} \mathbf{z}_m, \forall m \in \mathcal{M},
			\end{cases}
	\end{equation}
	where $ \mathbf{z}_m $ are new variables and $ \mathbf{q} = \left[ \mathrm{e}^{\mathrm{j} \phi_1}, \dots, \mathrm{e}^{\mathrm{j} \phi_Q} \right]^\mathrm{T} $.
\end{proposition}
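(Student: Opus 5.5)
The plan is to prove the equivalence by showing that the map $\mathbf{z}_m \mapsto \mathbf{q}^\mathrm{T}\mathbf{z}_m$ is a bijection. It sends the set $\mathcal{Z} \triangleq \{\mathbf{z}\in\mathbb{B}^{Q\times 1} : \mathbf{1}^\mathrm{T}\mathbf{z}=1\}$ onto the phase alphabet $\{\mathrm{e}^{\mathrm{j}\phi_1},\dots,\mathrm{e}^{\mathrm{j}\phi_Q}\}$. Since $\mathrm{C}_{3}$ and $\mathrm{F}_{1}$--$\mathrm{F}_{3}$ are both imposed separately for each $m \in \mathcal{M}$ and do not couple different elements, it suffices to argue for a single fixed $m$. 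The claim for all $m$ then follows elementwise.

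First I will characterize $\mathcal{Z}$. A vector with binary entries whose entries sum to one has exactly one entry equal to $1$ and all others equal to $0$. Hence $\mathcal{Z} = \{\mathbf{e}_1,\dots,\mathbf{e}_Q\}$, the standard basis vectors of $\mathbb{R}^{Q}$. This uses only $\mathrm{F}_{1}$ and $\mathrm{F}_{2}$, via a counting argument: the entries are nonnegative integers in $\{0,1\}$ summing to $1$.

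Next I will prove the two inclusions.

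($\Leftarrow$) Suppose $\mathbf{z}_m$ satisfies $\mathrm{F}_{1}$--$\mathrm{F}_{2}$. Then $\mathbf{z}_m=\mathbf{e}_q$ for some $q$, and $\mathrm{F}_{3}$ gives $[\mathbf{w}]_m=\mathbf{q}^\mathrm{T}\mathbf{e}_q=\mathrm{e}^{\mathrm{j}\phi_q}$. So $\mathrm{C}_{3}$ holds.

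($\Rightarrow$) Suppose $[\mathbf{w}]_m=\mathrm{e}^{\mathrm{j}\phi_q}$ for some $q$. Choosing $\mathbf{z}_m=\mathbf{e}_q$ satisfies $\mathrm{F}_{1}$--$\mathrm{F}_{3}$. So every $\mathbf{w}$ feasible for $\mathrm{C}_{3}$ extends to a feasible $(\mathbf{w},\{\mathbf{z}_m\})$, and the projection of the new feasible set onto $\mathbf{w}$ equals the original one.

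Finally, I will note why this preserves global optimality. The objective and the remaining constraints depend on $\mathbf{w}$ but not on $\mathbf{z}_m$. Equality of these projections therefore implies that the optimal values coincide and that optimal solutions correspond.

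The only delicate point is whether the lifting is one-to-one. If the phases $\phi_q$ are distinct modulo $2\pi$, the map from $\mathcal{Z}$ to the alphabet is a bijection. Otherwise it is merely surjective. Surjectivity is all that equivalence of the feasible $\mathbf{w}$-sets requires, so I will state the argument so that it relies on surjectivity only. I do not expect a genuine obstacle here; the proof is essentially the one-hot encoding argument.
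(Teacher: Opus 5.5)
Your proposal is correct and uses the same one-hot encoding argument as the paper: $\mathrm{F}_1$--$\mathrm{F}_2$ force each $\mathbf{z}_m$ to be a standard basis vector, and $\mathrm{F}_3$ then selects exactly one admissible phase $\mathrm{e}^{\mathrm{j}\phi_q}$. Your version is more explicit than the paper's, which only describes the encoding and does not separately check both inclusions or the projection onto $\mathbf{w}$.
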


\subsection{Transformation of constraint $ \mathrm{C}_{4} $}

The nonconvexity and summation over bilinear terms $\alpha_k \mathbf{w}$ inside the absolute value make constraint $\mathrm{C}_4$ challenging to optimize. To address this, we develop an exact reformulation that eliminates these nonconvexities through variable augmentation while preserving the original solution space.

\begin{proposition} \label{thm:proposition-4}
	Constraint $ \mathrm{C}_{4} $ can be equivalently rewritten as constraints $ \mathrm{G}_{1} $, $ \mathrm{G}_{2} $, $ \mathrm{G}_{3} $, $ \mathrm{G}_{4} $, $ \mathrm{G}_{5} $, $ \mathrm{G}_{6} $, and $ \mathrm{G}_{7} $ 
	\begin{equation} \nonumber
		\mathrm{C}_{4}
		\Leftrightarrow 
			\begin{cases}
				   	\mathrm{G}_{1}: \textstyle \mathbf{d}_{u,k}^\mathrm{H} \mathbf{R} \mathbf{d}_{u,k} \geq \alpha_k \Gamma_\mathrm{th} \sigma_{u}^2, \forall u \in \mathcal{U}, k \in \mathcal{K},
				   	\\
				   	\mathrm{G}_{2}: \left[ \mathbf{R} \right]_{m,m} =  1, \forall m, \in \mathcal{M},
				   	\\
				   	\mathrm{G}_{3}: \left[ \mathbf{R} \right]_{m,m'} =  \left[ \mathbf{R} \right]_{m',m}^*, \forall m, m' \in \mathcal{M}, m'> m,
				   	\\	
				   	\mathrm{G}_{4}: \left[ \mathbf{R} \right]_{m,m'} =  \mathbf{q}^\mathrm{T} \mathbf{Y}_{m,m'} \mathbf{q}^*, \forall m, m' \in \mathcal{M}, m'> m,
				   	\\	
				   	\mathrm{G}_{5}: \mathbf{Y}_{m,m'} \mathbf{1} = \mathbf{z}_m, \forall m, m' \in \mathcal{M}, m' > m,
				   	\\	
				   	\mathrm{G}_{6}: \mathbf{Y}_{m,m'}^\mathrm{T} \mathbf{1} = \mathbf{z}_{m'}, \forall m, m' \in \mathcal{M}, m' > m,
				   	\\	
				   	\mathrm{G}_{7}: \mathbf{Y}_{m,m'} \in \mathbb{B}^{Q \times Q}, \forall m, m' \in \mathcal{M},  m' > m,
			\end{cases}
	\end{equation}
	where $ \mathbf{Y}_{m,m'} $ and $ \mathbf{R} $ are new variables, whereas $ \mathbf{d}_{u,k} = \mathrm{diag} \left( \mathbf{G}^* \mathbf{t}_k^* \right) \mathbf{h}_{u}^* $. 
\end{proposition}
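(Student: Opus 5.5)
Since exactly one $\alpha_k$ equals one, I would first use $\mathrm{C}_1$ and $\mathrm{C}_2$ to rewrite the user SNR. If $\alpha_{k^\star}=1$, then $\sum_k \alpha_k \mathbf{h}_u^\mathrm{T}\mathrm{diag}(\mathbf{w})\mathbf{G}\mathbf{t}_k = \mathbf{h}_u^\mathrm{T}\mathrm{diag}(\mathbf{w})\mathbf{G}\mathbf{t}_{k^\star}$. By the same identity used for $\mathbf{p}_{e,k}$ in \cref{thm:proposition-1}, this equals $\mathbf{d}_{u,k^\star}^\mathrm{H}\mathbf{w}$. Hence $\mathrm{C}_4$ is equivalent to $|\mathbf{d}_{u,k}^\mathrm{H}\mathbf{w}|^2 \ge \alpha_k \Gamma_\mathrm{th}\sigma_u^2$ for all $u,k$. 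For $k=k^\star$ this is exactly $\mathrm{C}_4$. For $k\neq k^\star$ the right-hand side is zero, so the constraint is trivially satisfied. Writing $|\mathbf{d}_{u,k}^\mathrm{H}\mathbf{w}|^2=\mathbf{d}_{u,k}^\mathrm{H}\mathbf{w}\mathbf{w}^\mathrm{H}\mathbf{d}_{u,k}$ and setting $\mathbf{R}=\mathbf{w}\mathbf{w}^\mathrm{H}$ gives $\mathrm{G}_1$. The remaining task is to show that $\mathrm{G}_2$–$\mathrm{G}_7$, together with $\mathrm{F}_1$–$\mathrm{F}_3$ from \cref{thm:proposition-3}, force $\mathbf{R}=\mathbf{w}\mathbf{w}^\mathrm{H}$ exactly.

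\emph{Forward direction.} Given feasible $(\boldsymbol{\alpha},\mathbf{w})$ with one-hot $\mathbf{z}_m$, I would set $\mathbf{Y}_{m,m'}=\mathbf{z}_m\mathbf{z}_{m'}^\mathrm{T}$ and $\mathbf{R}=\mathbf{w}\mathbf{w}^\mathrm{H}$, and check each constraint.
\begin{itemize}
\item $\mathrm{G}_7$ holds because the product of binary entries is binary.
\item $\mathrm{G}_5$ and $\mathrm{G}_6$ follow from $\mathbf{z}_{m'}^\mathrm{T}\mathbf{1}=1$ and $\mathbf{z}_m^\mathrm{T}\mathbf{1}=1$.
\item $\mathrm{G}_4$ follows from $\mathbf{q}^\mathrm{T}\mathbf{z}_m\mathbf{z}_{m'}^\mathrm{T}\mathbf{q}^* = [\mathbf{w}]_m[\mathbf{w}]_{m'}^*$.
\item $\mathrm{G}_2$ holds because $|[\mathbf{w}]_m|=1$, and $\mathrm{G}_3$ is Hermitian symmetry.
\end{itemize}

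\emph{Reverse direction.} This is the key step. Given $\mathbf{Y}_{m,m'}$ binary with row sums $\mathbf{z}_m$ and column sums $\mathbf{z}_{m'}$, both one-hot, I would argue as follows. Let $q_m$ be the index of the single one in $\mathbf{z}_m$. All rows of $\mathbf{Y}_{m,m'}$ other than row $q_m$ sum to zero, so they vanish because the entries are nonnegative. Row $q_m$ sums to one, so it has exactly one unit entry, at column $c$. The column-sum condition then forces $c=q_{m'}$. Therefore $\mathbf{Y}_{m,m'}=\mathbf{z}_m\mathbf{z}_{m'}^\mathrm{T}$, and $\mathrm{G}_4$ gives $[\mathbf{R}]_{m,m'}=[\mathbf{w}]_m[\mathbf{w}]_{m'}^*$ for $m'>m$. $\mathrm{G}_3$ fills the lower triangle, and $\mathrm{G}_2$ supplies the diagonal $[\mathbf{w}]_m[\mathbf{w}]_m^*=1$. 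So $\mathbf{R}=\mathbf{w}\mathbf{w}^\mathrm{H}$ uniquely, and $\mathrm{G}_1$ reduces to the per-$k$ SNR inequality above, which recovers $\mathrm{C}_4$.

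One point I would handle carefully is the transpose/conjugation convention. $\mathrm{G}_3$ as written reads $[\mathbf{R}]_{m,m'}=[\mathbf{R}]_{m',m}^*$ for $m'>m$. I would use it only to define the lower-triangular entries, making sure the entries fixed by $\mathrm{G}_4$ and those defined through $\mathrm{G}_3$ are consistent, so that $\mathbf{R}$ is Hermitian and equals $\mathbf{w}\mathbf{w}^\mathrm{H}$ rather than its transpose. Apart from this, the argument relies only on the uniqueness of the factorization of $\mathbf{Y}_{m,m'}$ and on the single-beam structure from $\mathrm{C}_2$.
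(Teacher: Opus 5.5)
Your proposal is correct and follows the same chain as the paper's proof. You collapse the beam sum using the one-hot $\boldsymbol{\alpha}$ to get $|\mathbf{d}_{u,k}^\mathrm{H}\mathbf{w}|^2 \geq \alpha_k \Gamma_\mathrm{th}\sigma_u^2$ for all $u,k$, lift via $\mathbf{R}=\mathbf{w}\mathbf{w}^\mathrm{H}$, and linearize $\mathbf{Y}_{m,m'}=\mathbf{z}_m\mathbf{z}_{m'}^\mathrm{T}$ through its row and column sums. You substitute the active beam directly where the paper invokes Jensen's inequality plus tightness, which is cleaner for a $\geq$ constraint. The paper only shows that $\mathrm{G}_5$--$\mathrm{G}_7$ follow from $\mathbf{Y}_{m,m'}=\mathbf{z}_m\mathbf{z}_{m'}^\mathrm{T}$. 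Your reverse direction, which shows that a binary $\mathbf{Y}_{m,m'}$ with one-hot row and column sums must equal $\mathbf{z}_m\mathbf{z}_{m'}^\mathrm{T}$, therefore supplies the sufficiency half that the claimed equivalence needs.
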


\begin{figure*}[!t]
	\begin{subfigure}[b]{0.32\textwidth}
		\begin{center}
			\includegraphics[height = 3.6cm]{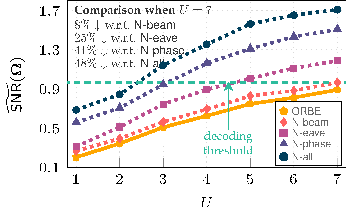}
			\vspace{-6mm}
			\caption{Varying number of eavesdroppers.}
			\label{fig:results-scenario-1}
		\end{center}
	\end{subfigure}
	\hfill 
	\begin{subfigure}[b]{0.32\textwidth}
		\begin{center}
			\includegraphics[height = 3.6cm]{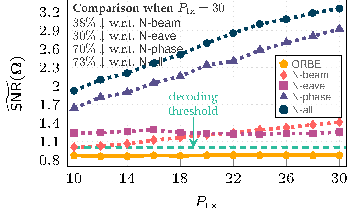}
			\vspace{-6mm}
			\caption{Varying transmit power.}
			\label{fig:results-scenario-2}
		\end{center}
	\end{subfigure}
	\hfill 
	\begin{subfigure}[b]{0.32\textwidth}
		\begin{center}
			\includegraphics[height = 3.6cm]{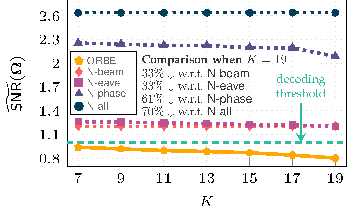}
			\vspace{-6mm}
			\caption{Varying number of beams.}
			\label{fig:results-scenario-4}
		\end{center}
	\end{subfigure}
	\vspace{-1mm}
	\caption{Wiretap SNR under varying numbers of eavesdroppers, numbers of users, and transmit power levels.}
	\vspace{-4mm}
	\label{fig:results-scenario-1-2}
\end{figure*}

\subsection{Reformulated problem} \label{sec:reformulated-problem}

From the above propositions, the problem becomes
\begin{align*} 
	\mathcal{P}': & ~ \underset{\boldsymbol{\Omega}' }{\mathrm{minimize}}  
	& & g \left( \boldsymbol{\Omega}' \right) \triangleq \mu
	\\
	& ~~~~~ \mathrm{s.t.} & & \mathrm{C}_{1} - \mathrm{C}_{2}, \mathrm{D}_{1} - \mathrm{D}_{3}, \mathrm{E}_{1}, \mathrm{F}_{1} - \mathrm{F}_{3}, \mathrm{G}_{1} - \mathrm{G}_{7},
\end{align*}
where $ \boldsymbol{\Omega}' \triangleq \left( \mu, \boldsymbol{\alpha}, \boldsymbol{\delta}, \mathbf{w}, \mathbf{Z}, {\mathbf{R}}, \widebar{\mathbf{Y}} \right) $, $ \boldsymbol{\delta} = \left[ \delta_1, \dots, \delta_E \right]^\mathrm{T} $, $ \mathbf{Z} = \left[ \mathbf{z}_1, \dots, \mathbf{z}_M \right] $, and $ \widebar{\mathbf{Y}} = \left[ \mathbf{Y}_{1,1}, \dots, \mathbf{Y}_{M,M} \right] $.

Our reformulation maps the intractable problem $\mathcal{P}$ into an equivalent \gls{MIQCP}, denoted by $\mathcal{P}'$, replacing the original combinatorial search with a structured optimization problem, which is solved via \glspl{IPM} within a \gls{BnC} framework. The \gls{BnC} framework handles the discrete variables via hierarchical branching, where each node induces a continuous QCP subproblem by relaxing a subset of the binary variables. Since each subproblem is convex, it is solved to global optimality using \gls{IPM}, yielding rigorous dual bounds that enable systematic pruning of the search space via relaxation, bounding, branching, and cutting~\cite{desrosiers2011branch, kronqvist2017center}, ultimately converging to the global optimum while avoiding the prohibitive cost of exhaustive enumeration. Deriving an exact expression for the computational complexity of $\mathcal{P}'$ is challenging due to solver-specific mechanisms within BnC. Nevertheless, an approximate worst-case characterization can be obtained as $\mathcal{O} \left( N_{\mathrm{nodes}} \cdot \sqrt{N_\mathrm{var}} (N_\mathrm{var}^3 + N_\mathrm{var}^2 N_\mathrm{con}) \right)$, where $ N_\mathrm{var} = 1 + K + E + MQ(Q+1) + \frac{M (M-1)}{2} $ and $ N_\mathrm{con} = 3 + E + M + K(2+E+U) + 2MQ(Q+1) + M(M+Q)(M-1) $ denote the number of variables and constraints, respectively, and  $N_{\text{nodes}}$ is the number of explored \gls{BnC} nodes, i.e., the number of convex QCP relaxations solved.


\section{Simulation Results} \label{sec:simulation-results}

Throughout the numerical simulations, large-scale fading coefficients are obtained following the UMa model \cite{3gpp:38.901}, assuming a \gls{BS} antenna gain of $ 10 $ dBi, a user antenna gain of $ 3 $ dBi, and an eavesdropper antenna gain of $ 5 $ dBi. Unless otherwise stated, the parameters are set as $ \chi = \hat{\chi}_u = \check{\chi}_e = 100 $, $ P_\mathrm{tx} = 10 $ W, $ f_0 = 26 $ GHz, $ \sigma_u^2 = \widebar{\sigma}_e^2 = -110 $ dB, $ N = 12 $, $ M = 100 $, and $ M_\mathrm{x} = M_\mathrm{z} = 10 $\footnote{To reflect practical \gls{RIS} deployments with reduced signaling overhead, we employ column-wise control. This yields an azimuthal fan-shaped beam with fixed elevation, aligning with realistic deployment constraints.}. The \gls{BS}-\gls{RIS} distance is $ 10 $ m, with angles $ \gamma^\mathrm{az} \in [20^\circ, 70^\circ] $ and $ \gamma^\mathrm{el} = 90^\circ $. The user-\gls{RIS} distance $ d_u $ ranges from $ 10 $ to $ 20 $ m with angles $ \theta_u^\mathrm{az} \in [110^\circ, 160^\circ] $ and $ \theta_u^\mathrm{el} = 90^\circ $. The eavesdropper-\gls{RIS} distance $ d_e $ ranges from $ 18 $ to $ 40 $ m with angles $ \beta_e^\mathrm{az}  \in [100^\circ, 170^\circ] $ and $ \beta_e^\mathrm{el} = 90^\circ $. A $ 2 $-bit phase control is considered, where the admissible phases are $ \left\lbrace 0^\circ, 90^\circ, 180^\circ, 270^\circ \right\rbrace $. The number of candidate BS-RIS beams is $ K = 15 $, uniformly covering the range $ [10^\circ, 80^\circ] $ (relative to the \gls{RIS}) with a resolution of $ 5^\circ $. All problems are solved using \textsf{CVX} with the \textsf{MOSEK} solver, and the results show the average over $ 100 $ random realizations.

\subsection{Ablation study} 
In \textbf{Scenarios I-III}, we evaluate the impact of neglecting each practical factor in the resource allocation design through an ablation study. The following schemes are considered.

$ \bullet $ {\mytextsf{ORBE}}: \emph{The proposed approach developed in \cref{{sec:proposed-approach}}.}

$ \bullet $ {\mytextsf{N-beam}}: \emph{Assumes a fixed BS-RIS beam, exposing the security inefficiency of static illumination.}

$ \bullet $ {\mytextsf{N-eave}}: \emph{Treats eavesdroppers as non-colluding entities, exposing the ``false sense of security'' caused by neglecting potential collaboration among adversaries.}

$ \bullet $ {\mytextsf{N-phase}}: \emph{Treats RIS phase shifts as continuous during optimization, then projects onto the discrete set, revealing security risks induced by projection.}

$ \bullet $ {\mytextsf{N-all}}: \emph{Combines all the aforementioned assumptions.}

\textbf{Scenario I:} 
\cref{fig:results-scenario-1} illustrates the impact of an increasing number of users, $ U $, with $ E = 5 $ and $ \Gamma_\mathrm{th} = 1 $. As $U$ increases, the colluding eavesdroppers enhance their decoding capability, since a wider distribution of users increases the likelihood of signal interception. In this setting, {\mytextsf{ORBE}} can serve up to seven users without allowing the eavesdroppers to achieve the decoding threshold $ \Gamma_\mathrm{th} = 1 $. In contrast, {\mytextsf{N-beam}} maintains secrecy up to six users, {\mytextsf{N-eave}} up to four, {\mytextsf{N-phase}} up to three, and {\mytextsf{N-all}} up to two. At $ U = 7 $, {\mytextsf{ORBE}} achieves a wiretap \gls{SNR} reduction of  $48\%$ compared to {\mytextsf{N-all}}.

\textbf{Scenario II:}
\cref{fig:results-scenario-2} illustrates the impact of an increasing transmit power, $P_\mathrm{tx}$, with $U = 3$ and $E = 7$. Notably, {\mytextsf{ORBE}} maintains a nearly power-invariant wiretap \gls{SNR}, a direct consequence of the joint optimization between the illumination beam and \gls{RIS} configuration. This framework effectively increases the null-depth at adversarial locations, ensuring that increased transmit power does not translate into proportional leakage. While {\mytextsf{N-eave}} exhibits a similar trend, its wiretap \gls{SNR} remains above the decoding threshold ($\Gamma_\mathrm{th} = 1$) because its non-colluding eavesdropper assumption fails to account for aggregated leakage. Conversely, {\mytextsf{N-beam}} and {\mytextsf{N-phase}} suffer from a rapid, monotonic increase in wiretap \gls{SNR}. For the former, the fixed illumination angle restricts the spatial degrees of freedom, while for the latter, phase quantization errors create leakage floors that scale with $P_\mathrm{tx}$. {\mytextsf{N-all}} suffers the most pronounced degradation due to the combined effect of all impairments. At $P_\mathrm{tx} = 30$~W, {\mytextsf{ORBE}} achieves up to a $73\%$ reduction in wiretap \gls{SNR} compared to {\mytextsf{N-all}}, and is the only scheme that maintains the wiretap \gls{SNR} below $\Gamma_\mathrm{th} = 1$.

\textbf{Scenario III:} \cref{fig:results-scenario-4} illustrates the impact of the number of beams, $K$, with $ U = 3 $ and $ E = 6 $. As $ K $ increases, {\mytextsf{ORBE}} achieves a progressive reduction in the wiretap \gls{SNR}, demonstrating that finer angular granularity in the illumination beam design significantly expands the spatial degrees of freedom for effective null-steering. Conversely, {\mytextsf{N-eave}} exhibits only a marginal response to $K$. This stems from its objective of minimizing the maximum individual wiretap \gls{SNR}, typical of non-colluding eavesdropper models, which fails to exploit beam granularity for aggregate leakage suppression. While {\mytextsf{N-phase}} exhibits a similar downward trend to {\mytextsf{ORBE}} due to improved angular resolution, {\mytextsf{N-beam}} and {\mytextsf{N-all}} remain invariant, as their use of a fixed beam prevents them from exploiting a larger codebook. At $K = 19$, {\mytextsf{ORBE}} achieves up to a $70\%$ reduction in wiretap \gls{SNR} compared with {\mytextsf{N-all}}.

\subsection{Baseline study} 

In \textbf{Scenarios IV-V}, we compare {\mytextsf{ORBE}} against three baselines, which are detailed below.

$ \bullet $ {\mytextsf{BL1}}: \emph{Assumes non-colluding eavesdroppers (e.g., \cite{wang2022multicast, lin2023secure}) with projected phases (e.g., \cite{xu2024reconfigurable, du2025discrete}) and dynamic illumination (e.g., \cite{dejonghe2024design, yilmaz2025joint}). It is optimized via ADMM and supplemented by phase randomization to enforce constraint feasibility.}

$ \bullet $ {\mytextsf{BL2}}: \emph{Assumes non-colluding eavesdroppers (e.g., \cite{wang2022multicast, lin2023secure}), discrete phases (e.g., \cite{abanto2026:fast}), and fixed illumination (e.g., \cite{yan2023passive}). It is optimized via a tailored variant of our framework.}

$ \bullet $ {\mytextsf{BL3}}: \emph{Assumes colluding eavesdroppers (e.g., \cite{shu2023intelligent, wang2022intelligent}) under a fixed illumination beam (e.g., \cite{yan2023passive}) and projected phases (e.g., \cite{xu2024reconfigurable, du2025discrete}). It is optimized via SDR.}

\begin{figure}[!t]
	\begin{subfigure}[b]{0.48\columnwidth}
		\begin{center}
			\includegraphics[height = 3.4cm]{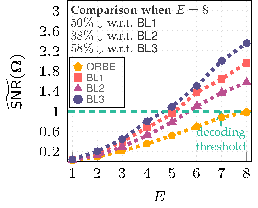}
			\vspace{-6mm}
			\caption{$ Q = 4 $.}
			\label{fig:results-scenario-9a}
		\end{center}
	\end{subfigure}
	\hfill 
	\begin{subfigure}[b]{0.48\columnwidth}
		\begin{center}
			\includegraphics[height = 3.4cm]{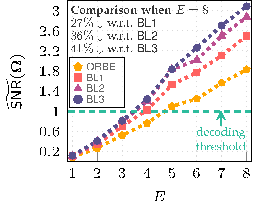}
			\vspace{-6mm}
			\caption{ $ Q = 2 $.}
			\label{fig:results-scenario-9b}
		\end{center}
	\end{subfigure}
	\vspace{-1mm}
	\caption{Wiretap SNR under two quantization levels.}
	\vspace{-4mm}
	\label{fig:results-scenario-3}
\end{figure}
\begin{figure}[!t]
	\begin{subfigure}[b]{0.48\columnwidth}
		\begin{center}
			\includegraphics[]{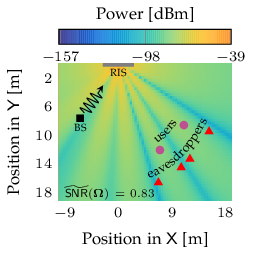}
			\vspace{-6mm}
			\caption{\mytextsf{ORBE}}
			\label{fig:results-scenario-10a}
		\end{center}
	\end{subfigure}
	\hspace{0.2cm} 
	\begin{subfigure}[b]{0.48\columnwidth}
		\begin{center}
			\includegraphics[]{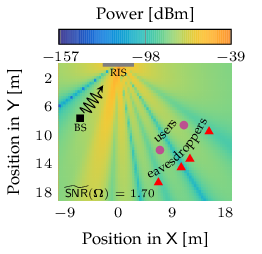}
			\vspace{-6mm}
			\caption{\mytextsf{BL1}}
			\label{fig:results-scenario-10b}
		\end{center}
	\end{subfigure}
	\begin{subfigure}[b]{0.48\columnwidth}
		\begin{center}
			\includegraphics[]{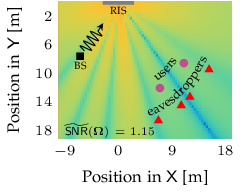}
			\vspace{-2mm}
			\caption{\mytextsf{BL2}}
			\label{fig:results-scenario-10c}
		\end{center}
	\end{subfigure}
	\hspace{0.2cm}
	\begin{subfigure}[b]{0.48\columnwidth}
		\begin{center}
			\includegraphics[]{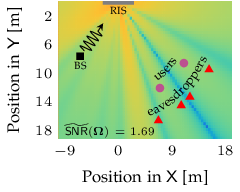}
			\vspace{-2mm}
			\caption{\mytextsf{BL3}}
			\label{fig:results-scenario-10d}
		\end{center}
	\end{subfigure}
	\vspace{-4mm}
	\caption{Received energy heatmaps.}
	\vspace{-4mm}
	\label{fig:results-scenario-10}
\end{figure}
\begin{figure}[!t]
	\begin{center}
		\includegraphics[]{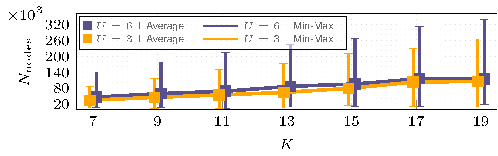}
		\vspace{-2mm}
		\caption{Explored nodes by {\mytextsf{ORBE}}.}
		\vspace{-6mm}
		\label{fig:results-scenario-11}
	\end{center}
\end{figure}

	\textbf{Scenario IV:} 
	\cref{fig:results-scenario-9a} illustrates the impact of an increasing number of eavesdroppers, $ E $, with $ U = 3 $, $ \Gamma_\mathrm{th} = 1 $, and $ Q = 4 $. As $E$ grows, the wiretap \gls{SNR} rises due to the heightened collusive gain. Nevertheless, {\mytextsf{ORBE}} demonstrates strong resilience, maintaining the wiretap \gls{SNR} below the decoding threshold $\Gamma_\mathrm{th} = 1$ for up to eight eavesdroppers, highlighting its robustness in dense adversarial environments. In comparison, {\mytextsf{BL1}} and {\mytextsf{BL2}} maintain secrecy only up to $E=5$, while {\mytextsf{BL3}} fails beyond $ E=4 $. Notably, {\mytextsf{ORBE}} achieves a $ 58\% $ reduction in wiretap \gls{SNR} relative to {\mytextsf{BL3}}. \cref{fig:results-scenario-9b} considers the same setup as \cref{fig:results-scenario-9a} but with a coarser phase resolution of $Q=2$, i.e., phase shifts $\{0^\circ, 180^\circ\}$. This coarser quantization increases the wiretap \gls{SNR} across all schemes and reduces resilience to eavesdropping due to the more constrained phase design space, which limits the available spatial degrees of freedom. Nevertheless, {\mytextsf{ORBE}} maintains a clear advantage, securely tolerating up to $ E=4 $ eavesdroppers, whereas {\mytextsf{BL1}}, {\mytextsf{BL2}}, and {\mytextsf{BL3}} support at most $ E=3 $. This highlights that joint optimization becomes particularly critical under restricted hardware capabilities.

	\textbf{Scenario V:} 
	\cref{fig:results-scenario-10} illustrates the received energy heatmaps in a rectangular setting with $ \Gamma_\mathrm{th} = 1 $. Relative to the RIS, the BS is located at $\gamma^\mathrm{az} = 50^\circ$, the users at $\boldsymbol{\theta}^\mathrm{az} = [120^\circ, 142^\circ]$, and the eavesdroppers at $\boldsymbol{\beta}^\mathrm{az} = [112^\circ, 126^\circ, 132^\circ, 148^\circ]$, with distances fixed at $\hat{d}_u = 14$ m and $\check{d}_e = 17$ m. {\mytextsf{ORBE}} achieves $\widetilde{\mathsf{SNR}}(\boldsymbol{\Omega}) = 0.83$, outperforming {\mytextsf{BL1}} ($1.70$), {\mytextsf{BL2}} ($1.15$), and {\mytextsf{BL3}} ($1.69$). This gain stems from its effective and spatially consistent energy suppression, as evidenced by the pronounced low-energy regions (deep blue) aligned with the eavesdropper directions. In contrast, {\mytextsf{BL1}}, {\mytextsf{BL2}}, and {\mytextsf{BL3}} exhibit less consistent spatial suppression, with localized nulls at some eavesdropper directions but elevated leakage at others, resulting in higher overall security risk. Notably, {\mytextsf{ORBE}} is the only scheme that maintains the wiretap \gls{SNR} below the decoding threshold ($\Gamma_\mathrm{th} = 1$). This underscores the importance of its joint design and globally optimal formulation in reliably safeguarding multicast transmissions.

\subsection{Complexity reduction}

This scenario evaluates the complexity reduction achieved by {\mytextsf{ORBE}} relative to exhaustive search, enabled by the \gls{BnC}.

	\textbf{Scenario VI:} \cref{fig:results-scenario-11} shows the number of explored nodes for {\mytextsf{ORBE}} as a function of $K$, following the settings of Scenario~III for $U = 3$ and $U = 6$. The number of explored nodes, $N_{\mathrm{nodes}}$ (as defined in the computational complexity analysis of $\mathcal{P}'$ in \cref{sec:reformulated-problem}), increases with $K$, exhibiting an approximately linear trend on average. Importantly, the number of explored nodes remains orders of magnitude lower than that required by exhaustive search, which spans from $7.3 \times 10^6$ candidate solutions at $K=7$ to $19.9 \times 10^6$ at $K=19$. In contrast, {\mytextsf{ORBE}} evaluates less than $1\%$ of the potential candidates on average, and remains below $2\%$ in the worst-case scenario. These results validate that the proposed \gls{MIQCP} formulation effectively prunes vast regions of the search space, ensuring computational tractability and global optimality with significantly reduced overhead.


\section{Conclusions} \label{sec:conclusions}

This work showed that widely used assumptions, such as non-colluding eavesdroppers, fixed BS-RIS illumination, and phase relaxation followed by projection, can severely underestimate secrecy risks in RIS-assisted multicasting. We proposed {\mytextsf{ORBE}}, a globally optimal framework that jointly optimizes the BS beam and discrete RIS phase shifts while accounting for colluding eavesdroppers. Results confirm that properly modeling these practical factors is critical for suppressing wiretap SNR and ensuring secure RIS-assisted multicasting.  
 

\section*{Acknowledgment} \label{sec:acknowledgment}
This research was supported by the Federal Ministry of Research, Technology and Space (BMFTR) under Grants 16KIS2411.

\appendix

\centerline{\textbf{Proof of Proposition 1}}
We introduce a new variable $ \mu $, specified as constraint $ \mathrm{D}_{1}: \mu \in \mathbb{R}_+ $, to bound $ f \left( \boldsymbol{\Omega} \right) $. The objective is then expressed as $ g \left( \boldsymbol{\Omega}' \right) \triangleq \mu $ by introducing constraint $ \mathrm{D}_\mathrm{aux,1}: \widetilde{\mathsf{SNR}} \left( \boldsymbol{\Omega} \right) \leq \mu $, where $ \boldsymbol{\Omega}' $ denotes the decision variables in $ \mathcal{P}'$. To bound each eavesdropper's SNR individually, we introduce variables $\delta_e$ with $\mathrm{D}_{2}: \delta_e \in \mathbb{R}+$, $\forall e \in \mathcal{E}$. Constraint $\mathrm{D}_{\mathrm{aux},1}$ can then be equivalently decomposed into $ \mathrm{D}_{3}: \sum_{e \in \mathcal{E}} \delta_{e}^2 \leq \mu $, and $ \mathrm{D}_\mathrm{aux,2}: \widebar{\mathsf{SNR}}_{e} \left( \boldsymbol{\Omega} \right) \leq \delta_{e}^2, \forall e \in \mathcal{E} $.	By defining $ \mathbf{p}_{e,k} = \mathrm{diag} \left( \mathbf{G}^* \mathbf{t}_k^* \right) \mathbf{f}_{e}^* $, we can transform $ \mathrm{D}_\mathrm{aux,2} $ into $ \mathrm{D}_\mathrm{aux,3}: \big| \textstyle \sum_{k \in \mathcal{K}} \alpha_k \mathbf{p}_{e,k}^\mathrm{H} \mathbf{w} \big|^2 / \widebar{\sigma}_{e}^2 \leq \delta_{e}^2, \forall e \in \mathcal{E} $. Since both sides of $ \mathrm{D}_\mathrm{aux,3} $ are nonnegative, we take the square root, yielding $ \mathrm{D}_{4}: \big| \textstyle \sum_{k \in \mathcal{K}} \alpha_k \mathbf{p}_{e,k}^\mathrm{H} \mathbf{w} \big| / \widebar{\sigma}_{e} \leq \delta_{e}, \forall e \in \mathcal{E} $.

\vspace{1.5mm}
\centerline{\textbf{Proof of Proposition 2}}
For any $ \mathsf{E}_e $, applying Jensen's inequality to the quadratic term on left-hand side of $ \mathrm{D}_{4} $ yields $ \mathrm{E}_\mathrm{aux,1}: \big| \textstyle \sum_{k \in \mathcal{K}} \alpha_k \mathbf{p}_{e,k}^\mathrm{H} \mathbf{w} \big| \leq \textstyle \textstyle \sum_{k \in \mathcal{K}} \big| \alpha_k \mathbf{p}_{e,k}^\mathrm{H} \mathbf{w} \big| $.
Let $k'$ denote the index of the selected transmit beam $\mathbf{t}_{k'}$. Then, $\mathrm{E}_{\mathrm{aux},1}$ can be expressed as $ \mathrm{E}_\mathrm{aux,2}: \big| \textstyle \alpha_{k'} \mathbf{p}_{e,k'}^\mathrm{H} \mathbf{w} + \sum_{k \neq k'} \alpha_k \mathbf{p}_{e,k}^\mathrm{H} \mathbf{w} \big| \leq \big| \alpha_{k'} \mathbf{p}_{e,k'}^\mathrm{H} \mathbf{w} \big| + \textstyle \sum_{k \neq k'} \big| \alpha_k \mathbf{p}_{e,k}^\mathrm{H} \mathbf{w} \big| $.
Since $\alpha_k = 0$ for all $k \neq k'$, the inequality becomes tight. Consequently, $\mathrm{D}_{4}$ can be equivalently rewritten as $ \mathrm{E}_\mathrm{aux,3}: \textstyle \sum_{k \in \mathcal{K}} \big| \alpha_k  \mathbf{p}_{e,k}^\mathrm{H} \mathbf{w} \big| / \widebar{\sigma}_{e} \leq \delta_{e}, \forall e \in \mathcal{E} $. Using the binary nature of $\alpha_k$, this simplifies to $ \mathrm{E}_\mathrm{aux,4}: \textstyle \sum_{k \in \mathcal{K}} \alpha_k \big|  \mathbf{p}_{e,k}^\mathrm{H} \mathbf{w} \big| / \widebar{\sigma}_{e} \leq \delta_{e}, \forall e \in \mathcal{E} $.
Since only one beam is selected at the \gls{BS}, the summation in $\mathrm{E}_{\mathrm{aux},4}$ can be decoupled using the \emph{big-M} method. This removes the bilinear coupling between $\alpha_k$ and $\mathbf{w}$, yielding $ \mathrm{E}_{1}: \big|  \mathbf{p}_{e,k}^\mathrm{H} \mathbf{w} \big| / \widebar{\sigma}_{e} \leq \delta_{e} + (1 - \alpha_k) L_{e}, \forall e \in \mathcal{E}, k \in \mathcal{K} $, where $ L_{e} = \sqrt{M P_\mathrm{tx}} \left\| \mathbf{f}_{e} \right\|_2 \left\| \mathbf{G} \right\|_\mathrm{F} / \widebar{\sigma}_{e} $ is an upper bound for $\big|  \mathbf{p}_{e,k}^\mathrm{H} \mathbf{w} \big| / \widebar{\sigma}_{e} $, computed via the Cauchy-Schwarz inequality.

\vspace{1.5mm}
\centerline{\textbf{Proof of Proposition 3}}
Following the one-hot encoding method in \cite{abanto2026:fast}, we represent the phase selection for each \gls{RIS} element by introducing the binary vectors $ \mathrm{F}_1: \mathbf{z}_m \in \mathbb{B}^{Q \times 1}, \forall m \in \mathcal{M} $. To ensure physical consistency, we enforce $ \mathrm{F}_2: \mathbf{1}^\mathrm{T} \mathbf{z}_m = 1, \forall m \in \mathcal{M} $, which guarantees that exactly one discrete phase shift is selected per element. The mapping between the admissible phase shifts, collected in $\mathbf{q} = [ \mathrm{e}^{\mathrm{j} \phi_1}, \dots, \mathrm{e}^{\mathrm{j} \phi_Q} ]^\mathrm{T}$, and the \gls{RIS} configuration is established via $\mathrm{F}_3: [\mathbf{w}]_m = \mathbf{q}^\mathrm{T} \mathbf{z}_m, \forall m \in \mathcal{M}$.

\vspace{1.5mm}
\centerline{\textbf{Proof of Proposition 4}}
Note that $ \mathrm{C}_{4} $ is equivalent to constraint $ \mathrm{G}_\mathrm{aux,1}: \textstyle \big| \sum_{k \in \mathcal{K}} \alpha_k \mathbf{d}_{u,k}^\mathrm{H} \mathbf{w} \big|^2 \geq \Gamma_\mathrm{th} \sigma_{u}^2 , \forall u \in \mathcal{U} $, where $ \mathbf{d}_{u,k} = \mathrm{diag} \left( \mathbf{G}^* \mathbf{t}_k^* \right) \mathbf{h}_{u}^* $. By applying Jensen's inequality and following the procedure in the proof of \textbf{\cref{thm:proposition-2}}, $ \mathrm{G}_\mathrm{aux,1} $ reduces to $ \mathrm{G}_\mathrm{aux,2}: \textstyle \sum_{k \in \mathcal{K}} \alpha_k \big| \mathbf{d}_{u,k}^\mathrm{H} \mathbf{w} \big|^2 \geq \Gamma_\mathrm{th} \sigma_{u}^2 , \forall u \in \mathcal{U} $. Since only one variable $ \alpha_k $ is equal to $ 1 $, we can decompose $ \mathrm{G}_\mathrm{aux,2} $ into the intersection of multiple constraints, collectively defined as $ \mathrm{G}_\mathrm{aux,3}: \textstyle \big| \mathbf{d}_{u,k}^\mathrm{H} \mathbf{w} \big|^2 \geq \alpha_k \Gamma_\mathrm{th} \sigma_{u}^2, \forall u \in \mathcal{U}, k \in \mathcal{K} $. Furthermore, $ \mathrm{G}_\mathrm{aux,3} $ can be recast as $ \mathrm{G}_{1}: \textstyle \mathbf{d}_{u,k}^\mathrm{H} \mathbf{R} \mathbf{d}_{u,k} \geq \alpha_k \Gamma_\mathrm{th} \sigma_{u}^2, \forall u \in \mathcal{U}, k \in \mathcal{K} $, subject to including a new constraint $ \mathrm{G}_\mathrm{aux,4}: \mathbf{R} = \mathbf{w} \mathbf{w}^\mathrm{H} $, as in \cite{abanto2025:optimal-user-target-scheduling-user-target-pairing-low-resolution-phase-only-beamforming-isac-systems}. Adopting an entry-wise notation, $ \mathrm{G}_\mathrm{aux,4} $ is equivalent to $ \mathrm{G}_{\mathrm{aux},5}: \left[ \mathbf{R} \right]_{m,m'} = \left[ \mathbf{w} \right]_m \left[ \mathbf{w}^{*} \right]_{m'}, \forall m, m' \in \mathcal{M} $. Leveraging $ \mathrm{F}_{3} $, we further transform $ \mathrm{G}_{\mathrm{aux},5} $ into $ \mathrm{G}_{\mathrm{aux},6}:  \left[ \mathbf{R} \right]_{m,m'} = \left( \mathbf{q}^\mathrm{T} \mathbf{z}_m \right) \left( \mathbf{q}^\mathrm{T} \mathbf{z}_{m'} \right)^* = \left( \mathbf{q}^\mathrm{T} \mathbf{z}_m \right) \left( \mathbf{z}_{m'}^\mathrm{T} \mathbf{q} \right)^* = \mathbf{q}^\mathrm{T} \mathbf{z}_m \mathbf{z}_{m'}^\mathrm{T} \mathbf{q}^* $. By exploiting the conjugate symmetry of $ \mathbf{R} $, we reformulate $ \mathrm{G}_{\mathrm{aux},6} $ as the following three constraints: $ \mathrm{G}_{2}: \left[ \mathbf{R} \right]_{m,m} =  1, \forall m, \in \mathcal{M} $, $ \mathrm{G}_{3}: \left[ \mathbf{R} \right]_{m,m'} =  \left[ \mathbf{R} \right]_{m',m}^*, \forall m, m' \in \mathcal{M}, m'> m $, and $ \mathrm{G}_{4}: \left[ \mathbf{R} \right]_{m,m'} =  \mathbf{q}^\mathrm{T} \mathbf{Y}_{m,m'} \mathbf{q}^*, \forall m, m' \in \mathcal{M}, m'> m $, subject to introducing $ \mathrm{G}_{\mathrm{aux},7}: \mathbf{Y}_{m,m'} = \mathbf{z}_{m} \mathbf{z}_{m'}^\mathrm{T}, \forall m, m' \in \mathcal{M}, m' > m $. Note that $ \mathrm{G}_{\mathrm{aux},7} $ couples $ \mathbf{z}_m $ and $ \mathbf{z}_{m'} $ multiplicatively, complicating tractability. To address this, we multiply both sides of $ \mathrm{G}_{\mathrm{aux},7} $ by $ \mathbf{1} $, yielding $ \mathrm{G}_{\mathrm{aux},8}: \mathbf{Y}_{m,m'} \mathbf{1} = \mathbf{z}_m \mathbf{z}_{m'}^\mathrm{T} \mathbf{1}, \forall m, m' \in \mathcal{M}, m' > m $. Leveraging $ \mathrm{F}_{2} $, which states that $ \mathbf{z}^\mathrm{T} \mathbf{1} = 1 $, then $ \mathrm{G}_{\mathrm{aux},8} $ reduces to $ \mathrm{G}_{5}: \mathbf{Y}_{m,m'} \mathbf{1} = \mathbf{z}_m, \forall m, m' \in \mathcal{M}, m' > m $. Similarly, from $ \mathrm{G}_{\mathrm{aux},7} $ we can also obtain $ \mathrm{G}_{6}: \mathbf{Y}_{m,m'}^\mathrm{T} \mathbf{1} = \mathbf{z}_{m'}, \forall m, m' \in \mathcal{M}, m' > m $. Given that $ \mathbf{Y}_{m,m'} $ yields from multiplying $ \mathbf{z}_n $ and $ \mathbf{z}_m $, each of which has one element $ 1 $, then $ \mathbf{Y}_{m,m'} $ is binary. This conditions is enforced through $ \mathrm{G}_{7}: \mathbf{Y}_{m,m'} \in \mathbb{B}^{Q \times Q}, \forall m, m' \in \mathcal{M}, m' > m $.

\bibliographystyle{IEEEtran}
\bibliography{IEEEabrv,ref}


\end{document}